\newcommand{\HL}[1]{#1}
\newcommand{\Prb}{\Pr}
\newcommand{\tr}{\operatorname{tr}}
\theoremstyle{plain}
\newtheorem{proposition}{Proposition}
\newtheorem{corollary}{Corollary}
\theoremstyle{remark}
\title{Selection-Induced Contraction of Innovation Statistics in Gated Kalman Filters}
\author{Barak Or,~\IEEEmembership{Member, IEEE}%
\thanks{B. Or is with metaor artificial intelligence, Haifa, Israel,
and with the Google Reichman Tech School, Reichman University, Herzliya, Israel.
E-mail: barakorr@gmail.com}}
\begin{document}
\maketitle

\begin{abstract}
Validation gating is a fundamental component of classical Kalman-based tracking
systems. Only measurements whose normalized innovation squared (NIS) falls below
a prescribed threshold are considered for state update. While this procedure is
statistically motivated by the chi-square distribution, it implicitly replaces
the unconditional innovation process with a conditionally observed one,
restricted to the validation event.

This paper shows that innovation statistics computed after gating converge to
gate-conditioned rather than unconditional nominal reference quantities.
Under classical linear-Gaussian assumptions, we derive exact expressions for the
first- and second-order moments of the innovation conditioned on ellipsoidal
gating, and show that gating induces a deterministic, dimension-dependent
contraction of the innovation covariance relative to the nominal reference.

The analysis is extended to nearest-neighbor (NN) association, which is shown to
act as an additional statistical selection operator. We prove that selecting the
minimum norm innovation among multiple in-gate measurements introduces an
unavoidable energy contraction, implying that nominal innovation reference
statistics cannot be preserved under nontrivial gating and association due to
deterministic selection effects, even under perfectly matched linear-Gaussian
Kalman filter assumptions.
\HL{Monte Carlo validation is provided for the analytic truncation and
order-statistic results, together with an end-to-end multi-time-step tracking
example showing that nominal NIS-based measurement-noise tuning learns a biased
measurement-noise scale whereas a gate-aware correction remains close to the
true scale.}
\end{abstract}

\begin{IEEEkeywords}
Kalman filtering, validation gating, normalized innovation squared (NIS),
nearest-neighbor association, target tracking, innovation statistics,
\HL{adaptive tuning, normalized estimation error squared (NEES)}.
\end{IEEEkeywords}

\section{Introduction}

Kalman-based tracking systems are a cornerstone of modern aerospace, radar, and
navigation applications, where reliable state estimation is required under
uncertainty.
A central role in such systems is played by innovation-based statistics, which
are used for measurement validation, data association, consistency monitoring,
and adaptive tuning.
Among these, the normalized innovation squared (NIS) is particularly attractive
due to its simple statistical characterization under nominal linear Gaussian
assumptions: in the absence of additional selection mechanisms, the NIS follows
a chi-square distribution whose moments provide natural reference values for
diagnostic and tuning procedures.

In operational tracking systems, however, innovation statistics are rarely
observed in their unconditional form.
Prior to data association and state update, measurements are typically subjected
to ellipsoidal validation gating, whereby only innovations whose Mahalanobis
distance falls below a prescribed threshold are accepted.
Validation gating is widely justified on statistical and practical grounds, as
it limits computational complexity and suppresses gross outliers.
As a result, nearly all innovation-based diagnostics used in practice operate on
a post-gate innovation stream rather than on the nominal innovation process
assumed in classical Kalman filter theory.

Despite its ubiquity, the statistical consequences of validation gating are
rarely treated explicitly.
Once gating is applied, the innovations that enter association, filtering, and
diagnostic logic are no longer samples from the nominal Gaussian distribution,
but from a distribution truncated to the validation region.
Consequently, innovation-based statistics computed after gating need not satisfy
the classical reference properties associated with the chi-square law.
In particular, empirical NIS statistics obtained from accepted measurements may
systematically deviate from their nominal expectations even when the underlying
system model and noise statistics are perfectly matched.

A closely related issue arises in data association.
When multiple measurements fall inside the validation gate, NN association is
commonly employed to select a single measurement for the state update.
While computationally efficient and widely used, NN association further
conditions the observed innovation through order-statistic selection, as it
selects the innovation with minimum normalized distance among multiple
candidates.
The combined effect of validation gating and NN association therefore induces a
multi-stage selection process whose impact on innovation statistics is not
captured by nominal Kalman filter assumptions.

The objective of this paper is to isolate and characterize the statistical
effects induced solely by validation gating and NN association,
independent of clutter, false alarms, nonlinear dynamics, or modeling errors.
Within the classical linear-Gaussian Kalman filtering framework, we provide an
exact characterization of the innovation moments conditioned on ellipsoidal
gating and show that validation gating induces a deterministic,
dimension-dependent contraction of the innovation covariance relative to the
nominal reference.
We further show that NN association acts as an additional statistical selection
operator and introduces an unavoidable, multiplicity-dependent contraction of
innovation energy through order-statistic selection.
Together, these results imply that nominal innovation statistics cannot be
preserved under nontrivial gating and association, even when all Kalman filter
assumptions are satisfied.

\HL{The analysis is complemented by a Monte Carlo study with a multi-time-step
constant-velocity tracking problem.}
\HL{This study compares ungated filtering, gated filtering with fixed covariance,
nominal NIS-based covariance tuning, and a gate-aware NIS correction, and reports
NIS, corrected NIS, NEES, position RMSE, update rate, and final measurement-noise
scale.}

\subsection{Related Work}

Innovation-based statistics are a fundamental component of Kalman filtering
theory and practice, where quantities such as the NIS are routinely used for
consistency monitoring, measurement validation, and adaptive tuning.
Under nominal linear Gaussian assumptions, the statistical properties of the
innovation process are well understood and documented in classical estimation
references \cite{kalman1960new,kalman1961new,jazwinski2007stochastic,bar2004estimation,brown1992introduction}.

In practical tracking systems, innovation statistics are commonly employed in
adaptive filtering and noise covariance estimation schemes, often relying on
assumed chi-square properties of the NIS
\cite{mehra1970identification,akhlaghi2017adaptive,wang2018variational,youn2020novel}.
Such approaches implicitly assume that the observed innovation sequence is
representative of the nominal innovation distribution.
However, in operational settings, innovation statistics are almost always
evaluated after validation gating and data association, conditions under which
these assumptions may no longer hold.

Validation gating and NN association are standard components of tracking
systems in aerospace, radar, and navigation applications
\cite{bar2004estimation,ramachandra2018kalman,shats1988discrete,karlsson2017future}.
Despite their widespread use, the statistical impact of these selection
mechanisms on innovation-based diagnostics has received comparatively little
explicit theoretical treatment.

While truncation of Gaussian distributions is classical, the implications of
mandatory truncation and order-statistic selection on innovation-based
consistency diagnostics have not been explicitly characterized in the Kalman
tracking literature.
In particular, existing analyses typically focus on detection performance or
association accuracy, rather than on the induced bias in innovation statistics
themselves.

\HL{A closely related recent direction is the use of matrix-valued tracking
statistics and Wishart-based consistency measures.}
Forsling \emph{et al.} \cite{forsling2025matrix}
\HL{emphasize that scalar consistency statistics compress vector-valued tracking
information, and propose matrix-valued measures for offline and online
assessment.}
\HL{The present paper is complementary: it studies how the commonly used scalar
NIS itself is altered by mandatory gating and association before any consistency
test or adaptive tuning rule sees the data.}

Recent work on adaptive and learning-enhanced Kalman filtering further
highlights the reliance on innovation statistics for online consistency
assessment and tuning \cite{or2021kalman,or2022hybrid,or2023learning}.
These methods benefit from a precise understanding of how structural elements of
the tracking pipeline, such as gating and association, affect the observed
innovation process.

The present work complements the existing literature by providing an explicit
statistical characterization of gate-conditioned innovation moments and by
showing that NN association introduces an unavoidable order-statistic bias.
Unlike prior adaptive or robust filtering approaches, the results here isolate
selection-induced effects under ideal modeling assumptions, thereby clarifying
fundamental limitations in the interpretation of innovation-based diagnostics.

The main contributions of this paper are summarized as follows:
\begin{itemize}
\item Validation gating is interpreted as a statistical conditioning operation,
and exact expressions are derived for the first- and second-order moments of the
gate-conditioned innovation under linear-Gaussian assumptions.
\item It is shown that ellipsoidal gating induces a deterministic contraction of
the innovation covariance that depends only on the gate threshold and the
measurement dimension.
\item NN association is characterized as an order-statistic selection mechanism,
and it is proven that this selection introduces an additional, unavoidable
contraction of innovation energy, leading to an impossibility result: nominal
innovation statistics cannot be preserved under nontrivial gating and
association.
\item \HL{Monte Carlo simulations validate the analytic
gate-conditioned and NN order-statistic expressions, including confidence
intervals and direct theory-versus-simulation comparisons.}
\item \HL{An end-to-end tracking study shows how post-gate NIS bias
propagates into adaptive measurement-noise tuning and affects RMSE and NEES.}
\end{itemize}

The remainder of the paper is organized as follows.
Section~II introduces the problem formulation and the innovation model.
Section~III analyzes validation gating as a statistical selection mechanism and
derives gate-conditioned innovation moments.
Section~IV discusses the implications of these results for innovation-based
consistency diagnostics.
Section~V analyzes the additional bias induced by NN association.
\HL{Section~VI presents closed-form two-dimensional results, Monte Carlo
validation, and an end-to-end tracking study.}
Section~VII concludes the paper and discusses implications and limitations.

\section{Problem Formulation}

\subsection{Kalman Filter Innovation Model}

The analysis is conducted within the classical linear Gaussian Kalman filtering
framework.
\HL{Throughout the paper, the nominal assumptions are the following: the dynamic
and measurement models used by the filter are correctly specified; the process
and measurement noises are zero-mean, white, Gaussian, mutually independent, and
independent of the initial estimation error; and the covariance matrices used by
the filter match the true covariances.}
\HL{These assumptions are stated explicitly to distinguish the structural
selection effects studied here from model mismatch, non-Gaussian disturbances,
or suboptimal covariance tuning.}

Consider the discrete-time linear measurement model
\begin{equation}
z_k = H x_k + v_k,
\end{equation}
where $x_k \in \mathbb{R}^n$ denotes the system state, $z_k \in \mathbb{R}^m$
the measurement vector, $H \in \mathbb{R}^{m \times n}$ the measurement matrix,
and
\HL{the measurement noise is zero-mean, white, and Gaussian with covariance}
$R$.

Let $\hat{x}_k^-$ and $P_k^-$ denote the predicted state estimate and error
covariance produced by a Kalman filter.
The innovation is defined as
\begin{equation}
\nu_k \triangleq z_k - H \hat{x}_k^-,
\label{eq:innovation_definition}
\end{equation}
with associated innovation covariance
\begin{equation}
S_k \triangleq H P_k^- H^\top + R.
\label{eq:innovation_covariance}
\end{equation}

Under the nominal assumptions stated above, the innovation sequence
$\{\nu_k\}$ is zero-mean, Gaussian, and white.
In the sequel, we analyze a single time index and omit the subscript $k$ for
notational clarity.
Accordingly, we model the innovation as the random vector
\begin{equation}
\nu : \Omega \rightarrow \mathbb{R}^m,
\qquad
\nu \sim \mathcal{N}(0,S),
\label{eq:innovation_distribution}
\end{equation}
where $S \in \mathbb{S}_{++}^m$ denotes the symmetric positive-definite
innovation covariance matrix.

\subsection{Post-Gate Innovation Distribution}

Under the nominal linear Gaussian Kalman filter assumptions, the normalized
innovation squared (NIS) is defined as
\begin{equation}
Z \triangleq \nu^\top S^{-1}\nu,
\end{equation}
and follows a chi-square distribution with $m$ degrees of freedom.

Let
\begin{equation}
\mathcal{A} \triangleq \{ Z \le \tau \}
\label{eq:gating_event}
\end{equation}
denote the validation event induced by ellipsoidal gating.
All innovation samples that pass the gate are therefore distributed according to
the conditional distribution of $\nu$ given $\mathcal{A}$, denoted by
$\nu \mid \mathcal{A}$.
This conditional distribution corresponds to a Gaussian law truncated to the
ellipsoidal region
\[
\mathcal{E}_\tau
\triangleq
\{ \nu \in \mathbb{R}^m : \nu^\top S^{-1}\nu \le \tau \}
\]
rather than the unconditional Gaussian distribution in
\eqref{eq:innovation_distribution}.
The statistical properties of this gate-conditioned innovation distribution
form the basis of the analysis in the remainder of this paper.

\section{Innovation Gating as a Statistical Selection Mechanism}

We interpret ellipsoidal gating as a statistical selection mechanism acting on
the innovation process.
Specifically, gating restricts the observed innovation stream to realizations
satisfying the acceptance event $\mathcal{A}$ defined in
\eqref{eq:gating_event}.
Consequently, all innovations that enter data association, state update, and
diagnostic logic are drawn from the conditional distribution
$\nu \mid \mathcal{A}$, rather than from the nominal Gaussian distribution
$\mathcal{N}(0,S)$.

\subsection{Distribution of the NIS and Ellipsoidal Gating}

\begin{proposition}[Distribution of the NIS]
\label{prop:NIS_chi_square}
If $\nu \sim \mathcal{N}(0,S)$ with $S \in \mathbb{S}_{++}^m$, then
\begin{equation}
Z \sim \chi^2_m.
\end{equation}
\end{proposition}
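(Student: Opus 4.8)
The plan is to reduce the quadratic form to a sum of squares of independent standard normal variables via a whitening change of coordinates. Since $S \in \mathbb{S}_{++}^m$, the spectral theorem provides a symmetric positive-definite square root $S^{1/2}$ with $S^{1/2}S^{1/2} = S$, and its inverse $S^{-1/2}$ satisfies $S^{-1/2}S^{-1/2} = S^{-1}$. (A Cholesky factorization $S = LL^\top$ would serve equally well; only the existence of an invertible factor of $S$ is needed.)

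First I would set $w \defeq S^{-1/2}\nu$. As a fixed linear image of a Gaussian vector, $w$ is Gaussian, with mean $S^{-1/2}\E[\nu] = 0$ and covariance $\E[ww^\top] = S^{-1/2} S\, S^{-1/2} = \I_m$, so $w \sim \mathcal{N}(0,\I_m)$. In particular the coordinates $w_1,\dots,w_m$ are jointly Gaussian and pairwise uncorrelated, hence independent, each distributed as $\mathcal{N}(0,1)$.

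Next I would rewrite the NIS in terms of $w$: $Z = \nu^\top S^{-1}\nu = \nu^\top S^{-1/2}S^{-1/2}\nu = (S^{-1/2}\nu)^\top(S^{-1/2}\nu) = w^\top w = \sum_{i=1}^m w_i^2$. This exhibits $Z$ as a sum of squares of $m$ independent standard normal variables, which is $\chi^2_m$ by definition of the chi-square law. If a self-contained identification of the distribution is preferred, I would instead compute the moment generating function $\E[e^{tZ}] = \prod_{i=1}^m \E[e^{t w_i^2}] = (1-2t)^{-m/2}$ for $t < 1/2$, which characterizes $\chi^2_m$ uniquely.

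There is no real obstacle in this argument; the only points deserving a sentence of justification are the existence and invertibility of the symmetric square root (equivalently, the availability of a full-rank factorization of $S$, guaranteed by $S \in \mathbb{S}_{++}^m$) and the passage from "uncorrelated" to "independent," which is legitimate precisely because $w$ is jointly Gaussian. Everything else is a direct computation.
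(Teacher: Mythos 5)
Your proof is correct. The paper itself offers no proof of Proposition~\ref{prop:NIS_chi_square}, invoking it as a classical result, but your whitening argument ($w = S^{-1/2}\nu$, so $Z = \|w\|^2$ is a sum of $m$ independent squared standard normals) is the standard derivation and uses exactly the same whitening device the paper employs in Appendix~A for Proposition~\ref{prop:gate_conditioned_moments}; the points you flag (existence of the invertible square root, uncorrelated implies independent for jointly Gaussian coordinates) are precisely the ones needing mention, and the optional MGF check is a sound self-contained alternative.
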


This classical result provides the probabilistic basis for ellipsoidal
validation gating: selecting the threshold $\tau$ as a chi-square quantile
ensures
\begin{equation}
\mathbb{P}\{\mathcal{A}\} = P_g
\end{equation}
under the nominal innovation model.

Once validation gating is applied, however, the innovation process is no longer
observed unconditionally, but only through realizations satisfying the
acceptance event $\mathcal{A} = \{ Z \le \tau \}$.
As a result, all innovation-based statistics computed after gating are
statistics of a conditionally observed random variable.
In the innovation space $\mathbb{R}^m$, the acceptance event $\mathcal{A}$
corresponds to the ellipsoidal region $\mathcal{E}_\tau$.
Validation gating therefore implements a deterministic truncation of the
innovation distribution to $\mathcal{E}_\tau$, retaining only realizations
within a fixed Mahalanobis radius.

\subsection{Gate-Conditioned Innovation Moments}

\begin{proposition}[Gate-Conditioned Innovation Moments]
\label{prop:gate_conditioned_moments}
Let $\nu \sim \mathcal{N}(0,S)$ with $S \in \mathbb{S}_{++}^m$, and let
$\mathcal{A} = \{ \nu^\top S^{-1}\nu \le \tau \}$ denote the validation event.
Then the gate-conditioned innovation satisfies
\begin{subequations}
\begin{align}
\mathbb{E}[\nu \mid \mathcal{A}] &= 0,
\label{eq:cond_mean_zero} \\
\mathbb{E}[\nu\nu^\top \mid \mathcal{A}] &= \gamma(\tau,m)\, S,
\label{eq:cond_covariance}
\end{align}
\end{subequations}
where the scalar contraction factor $\gamma(\tau,m)$ is given by
\begin{equation}
\gamma(\tau,m)
\triangleq
\frac{1}{m}\,
\mathbb{E}\!\left[ Z \mid Z \le \tau \right],
\qquad
Z \sim \chi^2_m,
\label{eq:gamma_definition}
\end{equation}
and satisfies $0 < \gamma(\tau,m) < 1$.
\end{proposition}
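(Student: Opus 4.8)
The plan is to reduce the problem to the isotropic (whitened) case and then exploit spherical symmetry of the standard Gaussian and of the gating ball. First I would introduce the symmetric positive-definite square root $S^{1/2}$ and set $w \defeq S^{-1/2}\nu$, so that $w \sim \mathcal{N}(0,\I_m)$ and the validation event becomes $\mathcal{A} = \{w^\top w \le \tau\}$, i.e.\ a centered Euclidean ball in $\R^m$. Since $\nu = S^{1/2}w$, both claimed identities will follow from their whitened counterparts $\E[w\mid\mathcal{A}] = 0$ and $\E[ww^\top\mid\mathcal{A}] = \gamma(\tau,m)\,\I_m$ via the linear relations $\E[\nu\mid\mathcal{A}] = S^{1/2}\E[w\mid\mathcal{A}]$ and $\E[\nu\nu^\top\mid\mathcal{A}] = S^{1/2}\E[ww^\top\mid\mathcal{A}]\,S^{1/2}$.

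For the first moment, I would note that both the density of $w$ and the region $\{w^\top w \le \tau\}$ are invariant under the reflection $w \mapsto -w$; hence the conditional law of $w \mid \mathcal{A}$ is symmetric about the origin, which forces $\E[w\mid\mathcal{A}] = 0$ and therefore establishes \eqref{eq:cond_mean_zero}. For the second moment, I would use the stronger invariance of the standard Gaussian density and of the ball under every orthogonal transformation $Q$: then $Qw \mid \mathcal{A}$ has the same law as $w \mid \mathcal{A}$, so $Q\,\E[ww^\top\mid\mathcal{A}]\,Q^\top = \E[ww^\top\mid\mathcal{A}]$ for all orthogonal $Q$. A symmetric matrix commuting with the entire orthogonal group is a scalar multiple of the identity, so $\E[ww^\top\mid\mathcal{A}] = c\,\I_m$ for some $c > 0$. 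Taking the trace and using $\tr(ww^\top) = w^\top w = \nu^\top S^{-1}\nu = Z$ (together with Proposition~\ref{prop:NIS_chi_square}, which identifies $Z \sim \chi^2_m$) gives $mc = \E[Z\mid Z \le \tau]$, i.e.\ $c = \gamma(\tau,m)$ as in \eqref{eq:gamma_definition}; substituting back yields \eqref{eq:cond_covariance}.

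It remains to establish the bounds $0 < \gamma(\tau,m) < 1$. Positivity is immediate: $Z \ge 0$ and $\Pr\{0 < Z \le \tau\} > 0$ for $\tau > 0$, so $\E[Z\mid Z \le \tau] > 0$. For the strict upper bound I would use the decomposition $m = \E[Z] = \gamma(\tau,m)\,m\,P_g + \E[Z\mid Z > \tau]\,(1-P_g)$, where $P_g = \Pr\{Z \le \tau\} \in (0,1)$, and argue that $\E[Z\mid Z > \tau] > m$ for every finite $\tau > 0$; this gives $m > \gamma(\tau,m)\,m\,P_g + m(1-P_g)$, hence $\gamma(\tau,m) < 1$.

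I expect the only genuinely delicate point to be this last step --- justifying the strict inequality $\E[Z\mid Z > \tau] > m$, equivalently that truncating a nondegenerate $\chi^2_m$ variable to $\{Z \le \tau\}$ strictly lowers its mean. This can be made rigorous either by the identity $\E[(Z-m)\mathbf{1}\{Z > \tau\}] = \E[(m-Z)\mathbf{1}\{Z \le \tau\}]$ followed by a sign analysis split according to whether $\tau \ge m$ or $\tau < m$ (using that the $\chi^2_m$ density is strictly positive on $(0,\infty)$, so the relevant integrands are of one sign and not a.s.\ zero), or by the monotonicity of $\tau \mapsto \E[Z\mid Z \le \tau]$ together with its limit $\E[Z] = m$ as $\tau \to \infty$. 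Everything else in the argument is routine linear algebra and symmetry.
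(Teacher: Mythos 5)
Your proposal is correct and follows essentially the same route as the paper's proof: whitening via $S^{-1/2}$, reflection symmetry for the conditional mean, rotational invariance plus the trace identity to show $\E[ww^\top\mid\mathcal{A}]$ is a scalar multiple of $\I_m$, and a truncation argument for the bounds on $\gamma(\tau,m)$. The only difference is that you justify the strict bound $\gamma(\tau,m)<1$ more carefully (via the total-expectation decomposition and $\E[Z\mid Z>\tau]>m$), whereas the paper simply asserts it from the fact that conditioning on $Z\le\tau$ removes the right tail.
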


Proposition~\ref{prop:gate_conditioned_moments}
\HL{shows that ellipsoidal gating preserves the zero mean of the innovation
while deterministically shrinking its second-order moment by a scalar factor.}
\HL{The factor depends only on the gate threshold and the measurement dimension,
not on the particular innovation covariance matrix.}

\subsection{Gate-Conditioned NIS Statistics}

An immediate consequence of \eqref{eq:cond_covariance} is an explicit expression
for the mean NIS after gating.

\begin{corollary}[Gate-Conditioned Mean NIS]
\label{cor:nis_mean}
Under the assumptions of Proposition~\ref{prop:gate_conditioned_moments},
\begin{equation}
\mathbb{E}[Z \mid \mathcal{A}] = m\,\gamma(\tau,m),
\label{eq:cond_nis_mean}
\end{equation}
where $Z = \nu^\top S^{-1}\nu$.
\end{corollary}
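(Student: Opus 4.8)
The plan is to derive \eqref{eq:cond_nis_mean} directly from \eqref{eq:cond_covariance} using the linearity of expectation together with the trace identity for quadratic forms. First I would write $Z = \nu^\top S^{-1}\nu = \tr(S^{-1}\nu\nu^\top)$, which is valid because $\nu^\top S^{-1}\nu$ is a scalar and equals its own trace, and the cyclic property of the trace then moves $S^{-1}$ next to $\nu\nu^\top$.

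Next I would take the conditional expectation given $\mathcal{A}$ and pull the (deterministic, bounded) linear trace operator through the expectation:
\begin{equation}
\mathbb{E}[Z \mid \mathcal{A}]
= \mathbb{E}\!\left[\tr\!\left(S^{-1}\nu\nu^\top\right) \mid \mathcal{A}\right]
= \tr\!\left(S^{-1}\,\mathbb{E}[\nu\nu^\top \mid \mathcal{A}]\right).
\end{equation}
Invoking \eqref{eq:cond_covariance} from Proposition~\ref{prop:gate_conditioned_moments}, the inner conditional second moment is $\gamma(\tau,m)\,S$, so
\begin{equation}
\mathbb{E}[Z \mid \mathcal{A}]
= \tr\!\left(S^{-1}\,\gamma(\tau,m)\,S\right)
= \gamma(\tau,m)\,\tr\!\left(S^{-1}S\right)
= \gamma(\tau,m)\,\tr(\I_m)
= m\,\gamma(\tau,m),
\end{equation}
which is exactly \eqref{eq:cond_nis_mean}.

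There is really no hard part here: the statement is an immediate corollary of the preceding proposition, and the only thing to be careful about is the justification for exchanging the trace with the conditional expectation, which is legitimate because the trace is a fixed finite linear functional of the entries of $\nu\nu^\top$ and those entries are integrable (indeed bounded on the event $\mathcal{A}$, since $\mathcal{A}$ restricts $\nu$ to the compact ellipsoid $\mathcal{E}_\tau$). One could alternatively note the consistency check that $\gamma(\tau,m) = \tfrac{1}{m}\mathbb{E}[Z \mid \mathcal{A}]$ by definition \eqref{eq:gamma_definition}, so the corollary simply rearranges that definition once \eqref{eq:cond_covariance} is in hand; I would mention this as a remark confirming internal consistency rather than as the main line of argument.
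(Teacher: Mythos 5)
Your proof is correct and matches the paper's treatment: the paper presents the corollary as an immediate consequence of \eqref{eq:cond_covariance}, via exactly the trace/linearity argument you give (which also appears inside the appendix proof of Proposition~\ref{prop:gate_conditioned_moments}, where $\tr(C)=\mathbb{E}[Z\mid Z\le\tau]=m\,\gamma(\tau,m)$). Your closing remark that the identity also follows directly from the definition \eqref{eq:gamma_definition} is the same internal-consistency observation implicit in the paper.
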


In contrast to the nominal reference value $\mathbb{E}[Z]=m$, the expected NIS
computed from accepted measurements reflects the conditioning induced by the
validation gate.

\section{Gate-Aware Consistency and Interpretation of Innovation Statistics}

The results of Section~III establish that ellipsoidal gating alters the
statistical properties of the innovation process.
In this section, we examine the implications of these results for
innovation-based consistency diagnostics, with particular emphasis on NIS.

\subsection{Classical Use of NIS for Consistency Validation}

In Kalman-based tracking systems, the normalized innovation $Z_k$ is commonly
employed as a diagnostic quantity for filter consistency.
Under the nominal Kalman filter assumptions recalled in Section~II, the NIS
follows a chi-square distribution with $m$ degrees of freedom.
As a result, standard practice is to compare instantaneous NIS values to
chi-square confidence bounds, or to monitor the empirical mean of $\{Z_k\}$ over
time and compare it to the nominal reference value $m$, as commonly done in
online consistency testing \cite{piche2016online}.

If the empirical mean is significantly below $m$, the filter is often declared
conservative in the innovation space; if it is significantly above $m$, it is
declared inconsistent or underestimating innovation uncertainty.
\HL{This interpretation is valid for the unconditional innovation sequence, but
it is not valid without modification for the accepted post-gate sequence.}

\subsection{Post-Gate NIS Is Not Chi-Square}

As a direct consequence of the conditioning induced by ellipsoidal gating, the
normalized innovation squared no longer follows its nominal distribution.
Specifically,
\begin{equation}
Z \mid \mathcal{A} \;\not\sim\; \chi^2_m,
\end{equation}
but instead follows a truncated chi-square law obtained by renormalizing
$\chi^2_m$ on the interval $[0,\tau]$ \cite{dunik2020covariance}.

Accordingly, as established in
Proposition~\ref{prop:gate_conditioned_moments},
\begin{equation}
\mathbb{E}[Z \mid \mathcal{A}] = m\,\gamma(\tau,m),
\end{equation}
with $\gamma(\tau,m)\in(0,1)$.
The nominal reference value $\mathbb{E}[Z]=m$ therefore ceases to be valid
whenever gating is applied.

\subsection{Gate-Aware Interpretation of NIS Statistics}

The results above imply that standard NIS diagnostics must be interpreted
relative to gate-conditioned reference values.
Specifically, when NIS statistics are computed from accepted innovations only,
the correct reference mean is given by Corollary~\ref{cor:nis_mean}.
Equivalently, a gate-aware normalized statistic is defined as
\begin{equation}
Z_k^{\text{corr}} \triangleq \frac{1}{\gamma(\tau,m)}\,Z_k,
\label{eq:corrected_nis}
\end{equation}
which satisfies
\begin{equation}
\mathbb{E}[Z_k^{\text{corr}} \mid \mathcal{A}] = m.
\end{equation}

This normalization allows the empirical mean of accepted NIS samples to be
compared to the classical reference mean, while preserving the existing Kalman
filter recursion and gating logic.
\HL{The correction is a diagnostic and tuning-reference correction; it does not
change the innovation used in the state update.}

\subsection{Implications for Adaptive Tuning and Diagnostics}

Many adaptive noise-tuning and consistency-monitoring schemes rely, either
explicitly or implicitly, on innovation covariance estimates or NIS-based
statistics \cite{dunik2017noise}, and often enforce nominal chi-square
innovation behavior as a tuning objective \cite{chen2023kalman}.

If such schemes operate on post-gate data without accounting for the
conditioning induced by gating, they will systematically underestimate the
innovation covariance,
\HL{may reduce the estimated measurement-noise level, and may tighten the
effective validation logic by making valid measurements more likely to be
rejected.}
\HL{The resulting behavior is better described as an increased probability of
missed updates or rejection of valid measurements, rather than as an increase in
false detections.}
This feedback mechanism arises even under ideal modeling assumptions and is a
direct consequence of ignoring the gate-conditioned nature of the observed
innovation process.

The contraction factor $\gamma(\tau,m)$ admits intuitive limiting behavior.
As $\tau \to \infty$, validation gating becomes inactive and
$\gamma(\tau,m) \to 1$, recovering the nominal innovation statistics.
Conversely, as $\tau \to 0$, only vanishingly small innovations are accepted and
$\gamma(\tau,m) \to 0$, driving the post-gate innovation energy to zero.

\section{Bias Induced by Nearest-Neighbor Association}

The previous sections characterized the statistical effect of ellipsoidal gating
on innovation moments.
In practical tracking systems, however, gating is only the first stage of a
selection process.
When multiple measurements fall inside the validation region, a data association
rule is applied to select a single measurement for the state update.
The most widely used rule in classical tracking is NN association
\cite{reid2003algorithm,bar1990tracking}.

This section shows that NN association introduces an additional and unavoidable
statistical bias that compounds the gate-induced effects.
Unlike modeling errors or tuning artifacts, this bias arises solely from
order-statistic selection and persists even under ideal nominal assumptions.

\subsection{Nearest-Neighbor Association as Statistical Selection}

Consider a time step at which $M \ge 1$ measurements pass the validation gate
$\mathcal{A}$.
Let $\{\nu^{(i)}\}_{i=1}^M$ denote the corresponding innovation vectors, each
generated according to the same post-gate distribution $(\nu \mid \mathcal{A})$.
NN association selects the innovation with minimum NIS,
\begin{equation}
i^\star \triangleq
\arg\min_{1 \le i \le M} \|\nu^{(i)}\|_{S^{-1}}^2,
\label{eq:nn_rule}
\end{equation}
and uses $\nu^{(i^\star)}$ for the update step.

While \eqref{eq:nn_rule} is often motivated algorithmically as a simple and
efficient approximation to more complex association schemes, it constitutes a
nonlinear statistical selection operator acting directly on the innovation.

\subsection{Energy Contraction Under NN Association}

The central effect of NN association is most clearly exposed at the level of
second-order energy.
The following proposition formalizes this effect without invoking Gaussianity or
any specific parametric form.

\begin{proposition}[Post-Gate NN Energy Contraction]
\label{prop:nn_energy_contraction}
Let $\nu \in \mathbb{R}^d$ be an innovation vector and let $\mathcal{A}$ denote
the gating acceptance event.
Let $\{\nu^{(i)}\}_{i=1}^M$ be independent samples from the conditional
distribution $(\nu \mid \mathcal{A})$, and let
$i^\star = \arg\min_i \|\nu^{(i)}\|$.
Then
\begin{equation}
\mathbb{E}\!\left[\|\nu^{(i^\star)}\|^2 \mid \mathcal{A}\right]
\;\le\;
\mathbb{E}\!\left[\|\nu\|^2 \mid \mathcal{A}\right],
\label{eq:nn_energy_bound}
\end{equation}
with strict inequality for $M>1$ whenever the conditional distribution
$(\nu \mid \mathcal{A})$ is non-degenerate.
\end{proposition}

This bound shows that NN association induces a systematic contraction of
innovation energy beyond that caused by gating alone.
The effect is purely statistical and follows from order-statistic selection.

\begin{corollary}[Impossibility of Preserving Nominal Innovation Energy Under Selection]
\label{cor:impossibility_nominal_energy}
Let $\nu \sim \mathcal{N}(0,S)$ with $S \in \mathbb{S}_{++}^m$ and let
$\mathcal{A}=\{\nu^\top S^{-1}\nu \le \tau\}$ be a nontrivial ellipsoidal gate
with $\Prb(\mathcal{A})\in(0,1)$.
Assume that, whenever the gate admits $M\ge 1$ candidate measurements, the
association rule selects the minimum-norm innovation among $M$ independent
post-gate candidates, yielding $\nu^{(i^\star)}$.
Then, for any $M>1$ for which $(\nu\mid \mathcal{A})$ is non-degenerate,
\begin{equation}
\mathbb{E}\!\left[\|\nu^{(i^\star)}\|^2 \mid \mathcal{A}\right]
\;<\;
\tr(S).
\label{eq:impossibility_energy}
\end{equation}
\end{corollary}

\begin{proof}
Proposition~\ref{prop:nn_energy_contraction} gives, for $M>1$ and
non-degenerate $(\nu\mid\mathcal{A})$,
\[
\mathbb{E}\!\left[\|\nu^{(i^\star)}\|^2 \mid \mathcal{A}\right]
<
\mathbb{E}\!\left[\|\nu\|^2 \mid \mathcal{A}\right].
\]
Moreover, Proposition~\ref{prop:gate_conditioned_moments} implies
\[
\mathbb{E}\!\left[\|\nu\|^2 \mid \mathcal{A}\right]
=
\tr\!\left(\gamma(\tau,m)S\right)
=
\gamma(\tau,m)\tr(S),
\]
with $\gamma(\tau,m)\in(0,1)$ for any nontrivial gate.
Therefore,
\[
\mathbb{E}\!\left[\|\nu^{(i^\star)}\|^2 \mid \mathcal{A}\right]
<
\gamma(\tau,m)\tr(S)
<
\tr(S),
\]
which proves the claim.
\end{proof}

\subsection{Interpretation via Order Statistics}

The purpose of this subsection is to provide intuition for
Proposition~\ref{prop:nn_energy_contraction} by isolating the purely statistical
mechanism underlying NN association.
NN association selects the minimum-norm element from a finite set of independent
post-gate innovations.
As a result, the selected innovation corresponds to the first-order statistic of
the squared norms.

For any nonnegative, non-degenerate random variable $X$ and $M \ge 2$
independent copies $\{X^{(i)}\}$,
\begin{equation}
\mathbb{E}\!\left[\min_i X^{(i)}\right] < \mathbb{E}[X].
\end{equation}
Applying this property to $X = \|\nu\|^2 \mid \mathcal{A}$ yields the strict
inequality in \eqref{eq:nn_energy_bound}.
Notably, this argument relies only on elementary properties of order statistics
and does not invoke Gaussianity, dimensionality, or the specific shape of the
validation gate.

\subsection{Other Data Association Methods}

\HL{The analysis focuses on NN association because it yields a clean
order-statistic statement.}
\HL{Other association methods do not remove the selection issue, but they expose
it in different mathematical forms.}
\HL{In probabilistic data association (PDA), the update uses likelihood-weighted
mixtures over gated measurements; the observed update is therefore affected by
both truncation and likelihood weighting rather than by a hard minimum alone.}
\HL{In multiple-hypothesis tracking (MHT), multiple hypotheses are propagated,
but gating, likelihood scoring, pruning, and hypothesis management still impose
selection on the set of innovations that influence the maintained tracks.}
\HL{Deriving exact gate-conditioned reference statistics for PDA and MHT requires
explicit assumptions on clutter, detection probability, and hypothesis
management, and is left as a natural extension.}

\section{Monte Carlo Validation and Tracking Case Study}

\HL{This section provides explicit illustrations, Monte Carlo validation, and a
multi-time-step tracking example.}
\HL{All simulations use white Gaussian process and measurement noises, correctly
specified covariances, and the same ellipsoidal validation rule used in the
analysis.}

\subsection{Two-Dimensional Closed-Form Gate Contraction}

For $m=2$, the NIS $Z = \nu^\top S^{-1}\nu$ follows a chi-square distribution
with two degrees of freedom under nominal linear Gaussian assumptions.
Equivalently, $Z$ is exponentially distributed with density
\begin{equation}
f_Z(z) = \frac{1}{2} e^{-z/2}, \qquad z \ge 0,
\end{equation}
and cumulative distribution function
\begin{equation}
F_Z(z) = 1 - e^{-z/2}.
\end{equation}

Given a validation gate with acceptance probability $P_g$, the gate threshold is
\begin{equation}
\tau = -2\ln(1-P_g).
\end{equation}
The gate-conditioned mean NIS is
\begin{equation}
\mathbb{E}[Z \mid Z \le \tau]
=
\frac{2 - (\tau+2)e^{-\tau/2}}{1-e^{-\tau/2}},
\end{equation}
or equivalently,
\begin{equation}
\mathbb{E}[Z \mid Z \le \tau]
=
2\left(1 + \frac{(1-P_g)\ln(1-P_g)}{P_g}\right).
\end{equation}
Thus
\begin{equation}
\gamma(P_g,2)
=
1 + \frac{(1-P_g)\ln(1-P_g)}{P_g}.
\end{equation}

\begin{table}[t]
\centering
\caption{Gate-induced contraction in two dimensions. Monte Carlo values are
reported with 95\% confidence half-widths.}
\label{tab:gate_validation}
\begin{tabular}{ccccc}
\toprule
$P_g$ & $\tau$ & $\gamma$ & Theory & MC \\
\midrule
0.90 & 4.605 & 0.744 & 1.488 & $1.487 \pm 0.004$ \\
0.95 & 5.991 & 0.842 & 1.685 & $1.684 \pm 0.005$ \\
0.99 & 9.210 & 0.953 & 1.907 & $1.907 \pm 0.006$ \\
\bottomrule
\end{tabular}
\end{table}

\begin{figure}[t]
\centering
\includegraphics[width=0.95\linewidth]{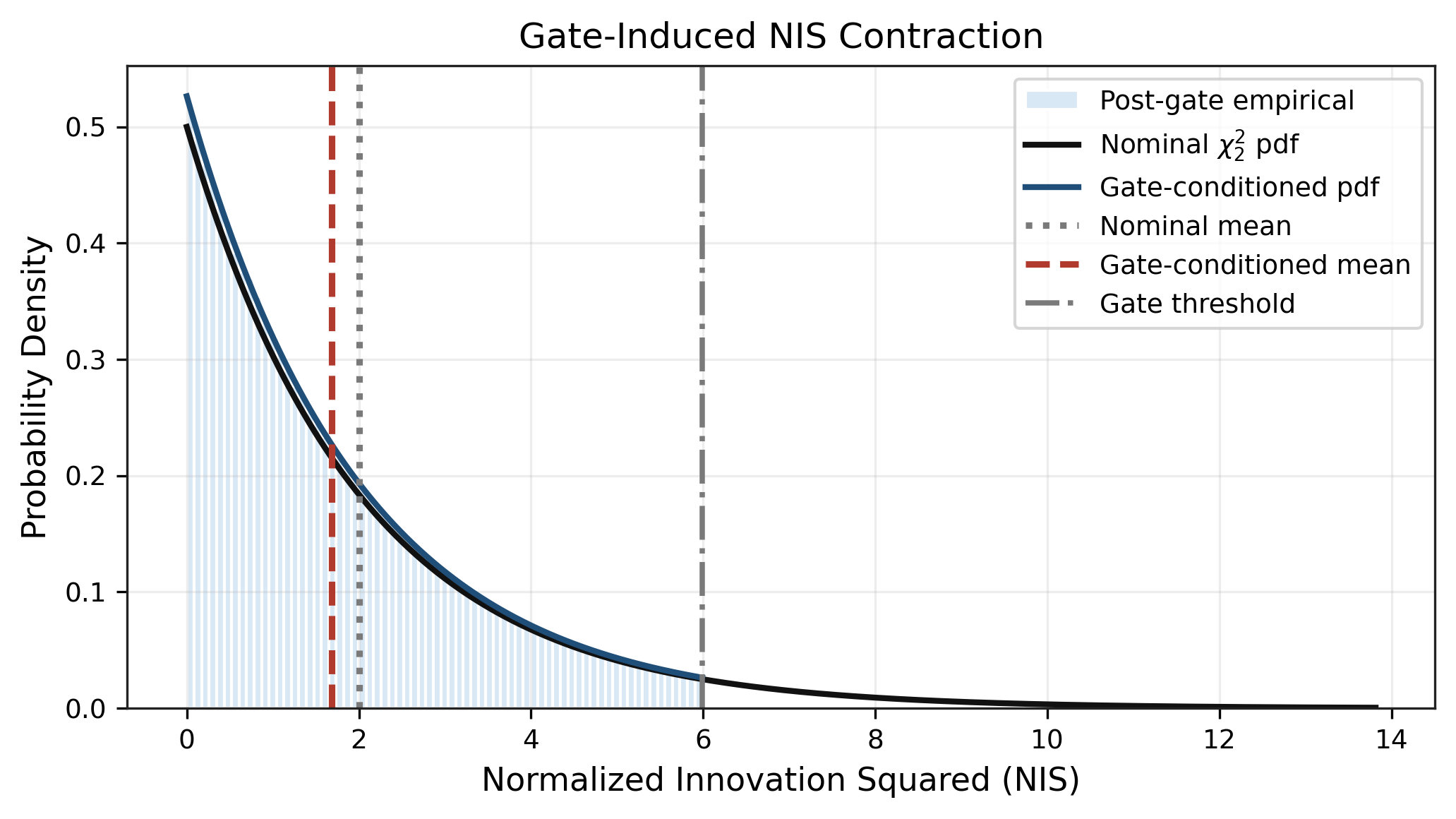}
\caption{\protect\HL{Monte Carlo validation of the gate-conditioned NIS distribution
for the two-dimensional case with gate probability 0.95. The accepted NIS
samples follow the truncated chi-square density, and their mean is lower than
the nominal reference.}}
\label{fig:gate_contraction_validation}
\end{figure}

Table~\ref{tab:gate_validation} and Fig.~\ref{fig:gate_contraction_validation}
\HL{show close agreement between the analytic gate-conditioned mean and Monte
Carlo estimates.}
\HL{For the common 0.95 gate probability, the expected accepted NIS is
approximately 1.685 rather than the nominal value 2.}

\subsection{NN Order-Statistic Validation}

\HL{For the NN experiment, independent samples are drawn from the
gate-conditioned NIS distribution and the minimum is selected among a fixed
number of in-gate candidates.}
For $Z_i \sim (Z\mid Z\le\tau)$ independent,
\begin{equation}
\mathbb{E}\!\left[\min_{1\le i\le M} Z_i\right]
=
\int_0^\tau
\left(
\frac{F_Z(\tau)-F_Z(z)}{F_Z(\tau)}
\right)^M dz.
\label{eq:min_order_mean}
\end{equation}

\begin{table}[t]
\centering
\caption{NN order-statistic contraction for $m=2$ and $P_g=0.95$.}
\label{tab:nn_validation}
\begin{tabular}{cccc}
\toprule
$M$ & Theory & MC & MC / nominal \\
\midrule
1  & 1.685 & $1.689 \pm 0.005$ & 0.845 \\
2  & 0.911 & $0.912 \pm 0.003$ & 0.456 \\
3  & 0.619 & $0.616 \pm 0.002$ & 0.308 \\
5  & 0.375 & $0.375 \pm 0.001$ & 0.187 \\
10 & 0.189 & $0.189 \pm 0.001$ & 0.094 \\
\bottomrule
\end{tabular}
\end{table}

\begin{figure}[t]
\centering
\includegraphics[width=0.95\linewidth]{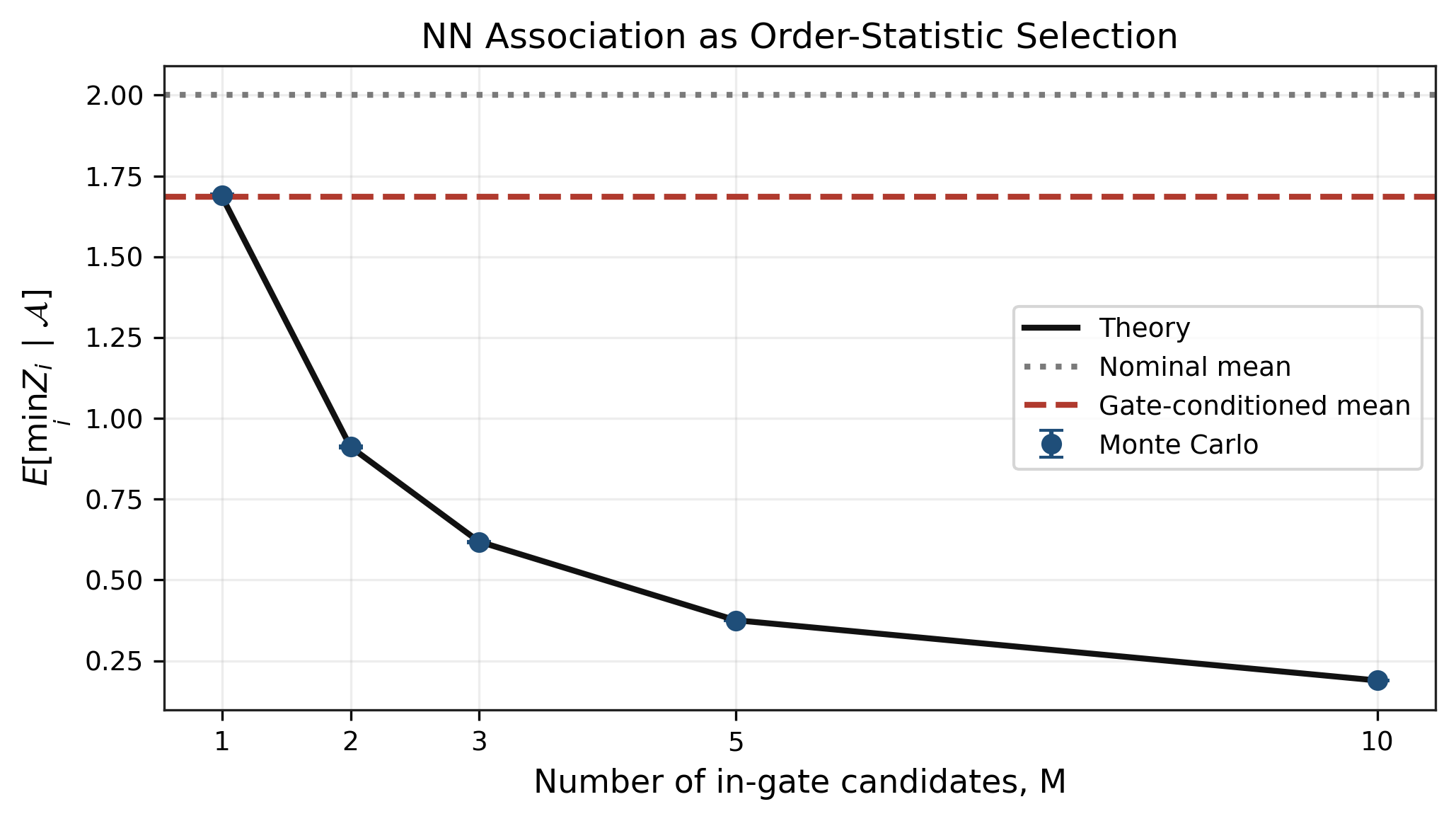}
\caption{\protect\HL{Expected selected NIS under NN association as a function of the
number of in-gate candidates. Monte Carlo results match the order-statistic
prediction and show rapid contraction as the number of candidates increases.}}
\label{fig:nn_order_statistic_validation}
\end{figure}

The results in Table~\ref{tab:nn_validation} and
Fig.~\ref{fig:nn_order_statistic_validation}
\HL{validate the order-statistic analysis.}
\HL{Even with two in-gate candidates, the expected selected NIS is reduced to
approximately 0.912, less than half of the nominal two-dimensional reference.}

\subsection{End-to-End Tracking and Tuning Study}

\HL{To illustrate the practical implication for adaptive tuning, a
multi-time-step constant-velocity tracking example is considered.}
The state is
\[
x_k =
\begin{bmatrix}
p_{x,k} & v_{x,k} & p_{y,k} & v_{y,k}
\end{bmatrix}^\top,
\]
with sampling period $\Delta t=1$.
The process and measurement models are
\begin{align}
x_k &= F x_{k-1} + w_{k-1},\\
z_k &= H x_k + v_k,
\end{align}
where $w_k$ and $v_k$ are zero-mean white Gaussian noises with correctly
specified covariances.
\HL{The measurement dimension is two, the gate probability is 0.95, and the
Monte Carlo study uses 1200 trials of 120 time steps.}
\HL{The compared cases are: no gate with fixed measurement covariance, gate with
fixed measurement covariance, gate with nominal NIS-based covariance tuning, and
gate with gate-aware NIS-based covariance tuning.}

\begin{table*}[t]
\centering
\caption{End-to-end tracking results. Values are Monte Carlo means with 95\%
confidence half-widths.}
\label{tab:tracking_results}
\begin{tabular}{lcccccc}
\toprule
Case & Used NIS & Corrected NIS & Final $R$ scale & Update rate & RMSE & NEES \\
\midrule
No gate, fixed $R$
& $1.986\pm0.011$ & $1.986\pm0.011$ & $1.000\pm0.000$
& $1.000\pm0.000$ & $3.284\pm0.019$ & $3.970\pm0.036$ \\
Gate, fixed $R$
& $1.717\pm0.008$ & $2.039\pm0.010$ & $1.000\pm0.000$
& $0.936\pm0.004$ & $5.444\pm1.138$ & $4.477\pm0.073$ \\
Gate, nominal NIS tuning
& $1.838\pm0.006$ & $2.182\pm0.007$ & $0.797\pm0.006$
& $0.910\pm0.004$ & $4.890\pm0.776$ & $4.962\pm0.076$ \\
Gate, gate-aware NIS tuning
& $1.706\pm0.005$ & $2.025\pm0.006$ & $1.042\pm0.008$
& $0.939\pm0.003$ & $4.142\pm0.410$ & $4.398\pm0.060$ \\
\bottomrule
\end{tabular}
\end{table*}

\begin{figure*}[t]
\centering
\includegraphics[width=0.95\linewidth]{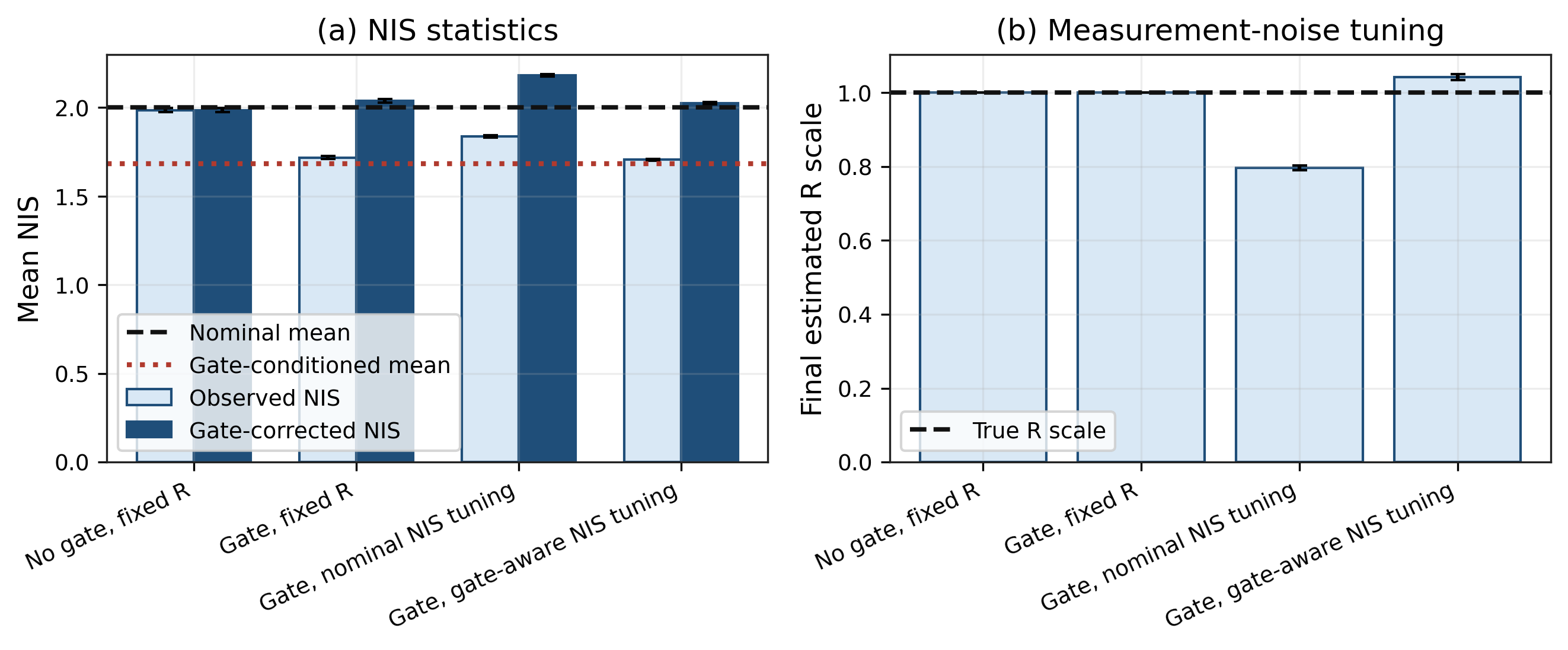}
\caption{\protect\HL{End-to-end tracking results for NIS statistics and adaptive
measurement-noise tuning. Nominal post-gate tuning learns a biased final
measurement-noise scale, whereas the gate-aware statistic remains close to the
true scale.}}
\label{fig:tracking_nis_r}
\end{figure*}

\begin{figure}[t]
\centering
\includegraphics[width=0.95\linewidth]{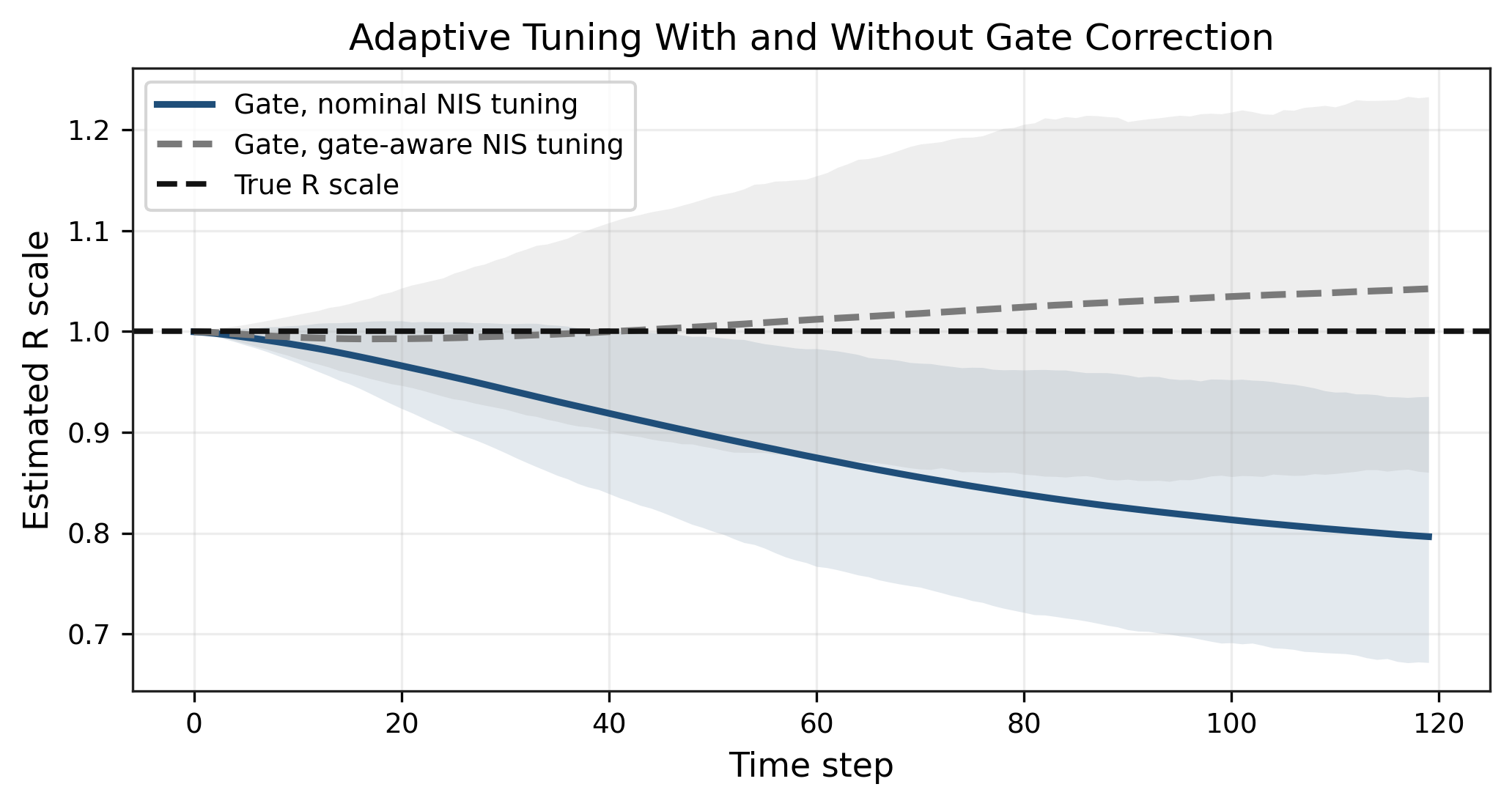}
\caption{\protect\HL{Time evolution of the estimated measurement-noise scale. Nominal
NIS tuning reacts to gate-contracted innovations by reducing the estimated
scale, while gate-aware tuning remains near the true reference.}}
\label{fig:r_tuning_time}
\end{figure}

\begin{figure*}[t]
\centering
\includegraphics[width=0.95\linewidth]{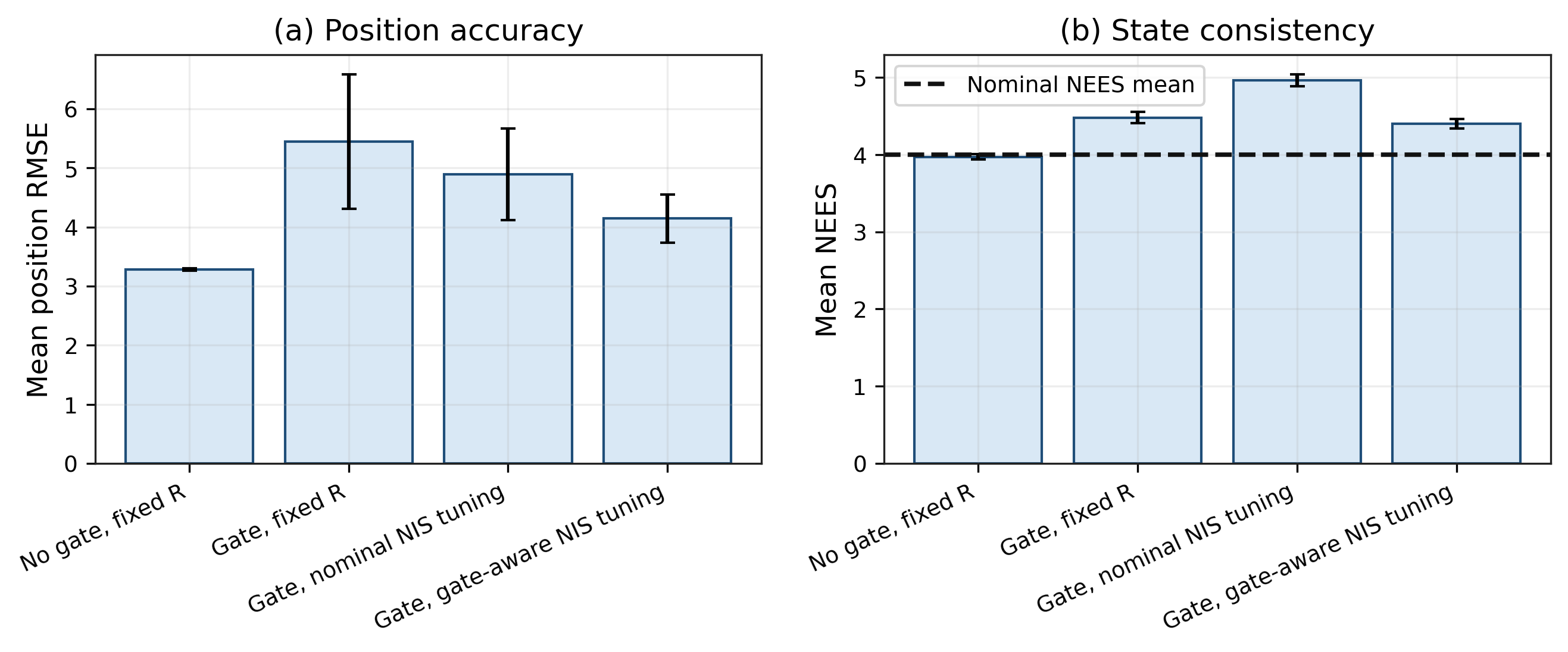}
\caption{\protect\HL{Position RMSE and NEES in the end-to-end tracking
experiment. The ungated case is a baseline that uses every measurement; it is
not the target that gated tuning is expected to outperform. The relevant tuning
comparison is between the two gated adaptive cases, where the gate-aware
reference reduces the bias induced by nominal post-gate NIS tuning. The NEES
panel also illustrates that NIS correction does not by itself guarantee exact
state-space consistency.}}
\label{fig:rmse_nees}
\end{figure*}

\HL{The end-to-end study confirms that the analytic gate-conditioned correction
has practical consequences for tuning.}
\HL{The ungated fixed-covariance filter has the lowest RMSE because it uses all
measurements and is included only as an ideal baseline without validation
rejection.}
\HL{The fixed gated filter has larger RMSE because valid measurements are
occasionally rejected by the gate, which is the expected cost of applying
validation.}
\HL{With nominal NIS tuning, the final measurement-noise scale is biased
downward to approximately 0.797, because the tuning rule attempts to force a
gate-contracted statistic toward an unconditional reference.}
\HL{With the gate-aware statistic, the final scale remains close to the true
value, approximately 1.042.}
\HL{Therefore, the performance-relevant comparison is within the gated adaptive
cases: gate-aware tuning yields lower mean RMSE and lower NEES than nominal
post-gate NIS tuning in this experiment.}
\HL{The NEES remains above the nominal value, emphasizing that NIS consistency
and NEES consistency are distinct properties.}

\subsection{NEES and Nonlinear Filters}

\HL{NIS consistency does not imply NEES consistency.}
\HL{Gating conditions not only the innovation statistic but also the event under
which a measurement update occurs.}
\HL{Consequently, the posterior state error after a gated update is also
conditioned on a selection event, and the resulting NEES need not follow the
unconditional chi-square reference even when the innovation correction restores
the mean accepted NIS.}
The Monte Carlo NEES results in Table~\ref{tab:tracking_results} and
Fig.~\ref{fig:rmse_nees}
\HL{illustrate this distinction.}

\HL{For nonlinear filters such as EKFs and UKFs, the exact linear-Gaussian
derivation does not apply globally, but the same selection mechanism operates
after local linearization or sigma-point prediction whenever an ellipsoidal
innovation gate is used.}
\HL{Problems such as bearings-only tracking can already exhibit fragile
chi-square approximations due to geometry and linearization error; validation
gating then further conditions the observed innovation sequence.}
\HL{A complete nonlinear theory is outside the scope of this paper, but the
linear result provides the baseline selection effect that any nonlinear
extension must account for.}

\section{Discussion}

The analysis developed in this paper provides a principled explanation for a
widely observed phenomenon in Kalman-based tracking systems: the tendency of
innovation-based diagnostics evaluated after gating and association to exhibit
systematically reduced innovation energy relative to nominal reference values,
even in well-functioning trackers.

Sections~III--VI show that this behavior arises from two structural selection
mechanisms inherent to practical tracking pipelines:
\begin{itemize}
\item ellipsoidal gating, which deterministically truncates the innovation
distribution and contracts its covariance, and
\item NN association, which introduces an additional order-statistic bias by
selecting the minimum-norm innovation among multiple candidates.
\end{itemize}

Both mechanisms operate even under ideal modeling assumptions.
They do not rely on clutter, false alarms, nonlinear dynamics, or model
mismatch.
As a result, tuning-based explanations alone are insufficient to account for
systematic bias observed in post-gate innovation statistics.

\HL{The Monte Carlo results show that the effect is not merely an asymptotic or
algebraic observation.}
\HL{In a multi-time-step tracking loop, using the unconditional NIS reference on
accepted innovations biases adaptive measurement-noise tuning, whereas using the
gate-conditioned correction substantially reduces that bias.}

\subsection{Implications for Classical Tracking Theory}

The apparent discrepancy between nominal and observed innovation statistics
arises because classical theory characterizes the unconditional innovation
process, whereas practical tracking systems operate on a conditionally observed
innovation stream.
The present work does not propose an alternative to adaptive or robust Kalman
filtering.
Rather, it provides a missing statistical layer that clarifies how
innovation-based diagnostics should be interpreted before invoking adaptation
mechanisms.
In this sense, the analysis complements existing tuning and robustness methods
by separating structural selection effects from genuine model mismatch.

\HL{The connection to the matrix-valued measures of Forsling et al.}
\cite{forsling2025matrix}
\HL{suggests an additional forward path.}
\HL{Scalar NIS gating compresses vector-valued innovation information into a
single quadratic form; matrix-valued consistency measures may retain richer
directional information and could be studied under analogous gate-conditioned
selection laws.}

\subsection{Implications for Implementation and Evaluation}

From an implementation perspective, the results suggest several practical
guidelines:
\begin{itemize}
\item Innovation covariance and NIS statistics computed after gating should be
interpreted using gate-conditioned reference values.
\item Systematically reduced innovation energy observed in post-gate diagnostics
does not necessarily indicate filter mistuning.
\item Adaptive tuning schemes based on innovation statistics should explicitly
account for selection-induced bias to avoid self-reinforcing measurement-noise
underestimation and valid-measurement rejection.
\item \HL{When ground truth is available, NEES should be evaluated separately
from NIS because correcting the accepted NIS reference does not guarantee
state-space consistency.}
\end{itemize}

These guidelines can be incorporated into existing tracking systems without
modifying the Kalman recursion, validation logic, or association rules
themselves.

\subsection{Limitations and Scope}

The analysis in this paper is deliberately restricted to linear measurement
models, Gaussian noise, ellipsoidal validation gates, and NN association.
These restrictions ensure that the identified effects are not artifacts of
nonlinearity or non-Gaussianity.

\HL{The Monte Carlo tracking experiment is also intentionally restricted to a
single-target, correctly specified, linear-Gaussian setting so that the observed
bias can be attributed to selection rather than to clutter modeling or
association failures.}
\HL{Extensions to nonlinear filters, PDA, MHT, and explicit clutter models are
important directions for future work, but require additional modeling choices
that would obscure the isolated selection mechanism studied here.}

\section{Conclusion}

Validation gating and NN association introduce predictable and quantifiable
biases in innovation statistics, even under ideal Kalman filter assumptions.
Ellipsoidal gating deterministically contracts the innovation covariance, while
NN association introduces an additional multiplicity-dependent bias through
order-statistic selection.

These structural effects explain why innovation-based diagnostics evaluated
after gating and association systematically deviate from nominal chi-square
references in practical tracking systems.
\HL{Monte Carlo validation confirms the analytic gate-conditioned and
order-statistic predictions, and an end-to-end tracking example demonstrates
that nominal NIS-based tuning can learn a biased measurement-noise scale when
applied to accepted post-gate innovations.}
\HL{Gate-aware interpretation restores the appropriate innovation reference,
although state-space consistency as measured by NEES remains a separate issue.}

\section*{Acknowledgment}
The author gratefully acknowledges Prof.\ Yaakov Bar-Shalom for his insightful
comments that helped clarify the interpretation and terminology of the paper.

\appendix

\section{Proof of Proposition~\ref{prop:gate_conditioned_moments}}

We prove the statements in Proposition~\ref{prop:gate_conditioned_moments} by
transforming the innovation into whitened coordinates and exploiting rotational
invariance of the Gaussian distribution.

Since $S \in \mathbb{S}_{++}^m$, there exists a symmetric matrix $S^{1/2}$ such
that $S^{1/2} S^{1/2} = S$.
Define the whitened innovation
\begin{equation}
u \triangleq S^{-1/2}\nu.
\label{eq:whitened_innovation}
\end{equation}
By construction,
\begin{equation}
u \sim \mathcal{N}(0,I_m),
\label{eq:u_distribution}
\end{equation}
and the NIS becomes
\begin{equation}
Z = \nu^\top S^{-1}\nu = u^\top u = \|u\|^2.
\label{eq:nis_whitened}
\end{equation}
Accordingly, the gating event can be written as
\begin{equation}
\mathcal{A} = \{ \|u\|^2 \le \tau \}.
\label{eq:gate_whitened}
\end{equation}

The distribution of $u$ is symmetric about the origin, and the event
$\|u\|^2 \le \tau$ is invariant under $u \mapsto -u$.
Therefore,
\[
\mathbb{E}[u \mid \mathcal{A}] = 0,
\]
and transforming back gives $\mathbb{E}[\nu \mid \mathcal{A}]=0$.

Define
\[
C \triangleq \mathbb{E}[u u^\top \mid \mathcal{A}].
\]
Since $u$ is rotationally invariant and the event $\|u\|^2 \le \tau$ depends
only on the norm of $u$, the conditional distribution $u\mid\mathcal{A}$ is also
rotationally invariant.
Consequently, $C$ commutes with all orthogonal transformations and must be of
the form $C=\gamma(\tau,m)I_m$.
Taking traces gives
\[
m\gamma(\tau,m)
=
\mathbb{E}[\tr(uu^\top)\mid\mathcal{A}]
=
\mathbb{E}[Z\mid Z\le\tau],
\]
which establishes \eqref{eq:gamma_definition}.
Transforming back to the original coordinates yields
\[
\mathbb{E}[\nu\nu^\top\mid\mathcal{A}]
=
S^{1/2} C S^{1/2}
=
\gamma(\tau,m)S.
\]
Finally, because $Z$ is nonnegative, nondegenerate, and has mean $m$, truncating
to the nontrivial event $Z\le\tau$ gives
$0<\mathbb{E}[Z\mid Z\le\tau]<m$, hence $0<\gamma(\tau,m)<1$.

\section{Proof of Proposition~\ref{prop:nn_energy_contraction}}

Condition on the gating acceptance event $\mathcal{A}$.
Let $\nu^{(1)},\dots,\nu^{(M)}$ be independent samples from
$(\nu \mid \mathcal{A})$, and define
\[
X_i \triangleq \|\nu^{(i)}\|^2,\qquad i=1,\dots,M.
\]
Then the $X_i$ are i.i.d. nonnegative random variables under the conditional
law.
NN association selects
\[
X_{(1)} \triangleq \|\nu^{(i^\star)}\|^2 = \min_{1\le i\le M}X_i.
\]
Since $X_{(1)}\le X_1$ almost surely,
\[
\mathbb{E}[X_{(1)}\mid\mathcal{A}]
\le
\mathbb{E}[X_1\mid\mathcal{A}]
=
\mathbb{E}[\|\nu\|^2\mid\mathcal{A}].
\]
If $M>1$ and the conditional distribution is nondegenerate, then
$\Pr(X_{(1)}<X_1\mid\mathcal{A})>0$, which yields strict inequality.

\bibliographystyle{IEEEtran}
\bibliography{references}

\begin{IEEEbiography}[{\includegraphics[width=1in,height=1.25in,clip,keepaspectratio]{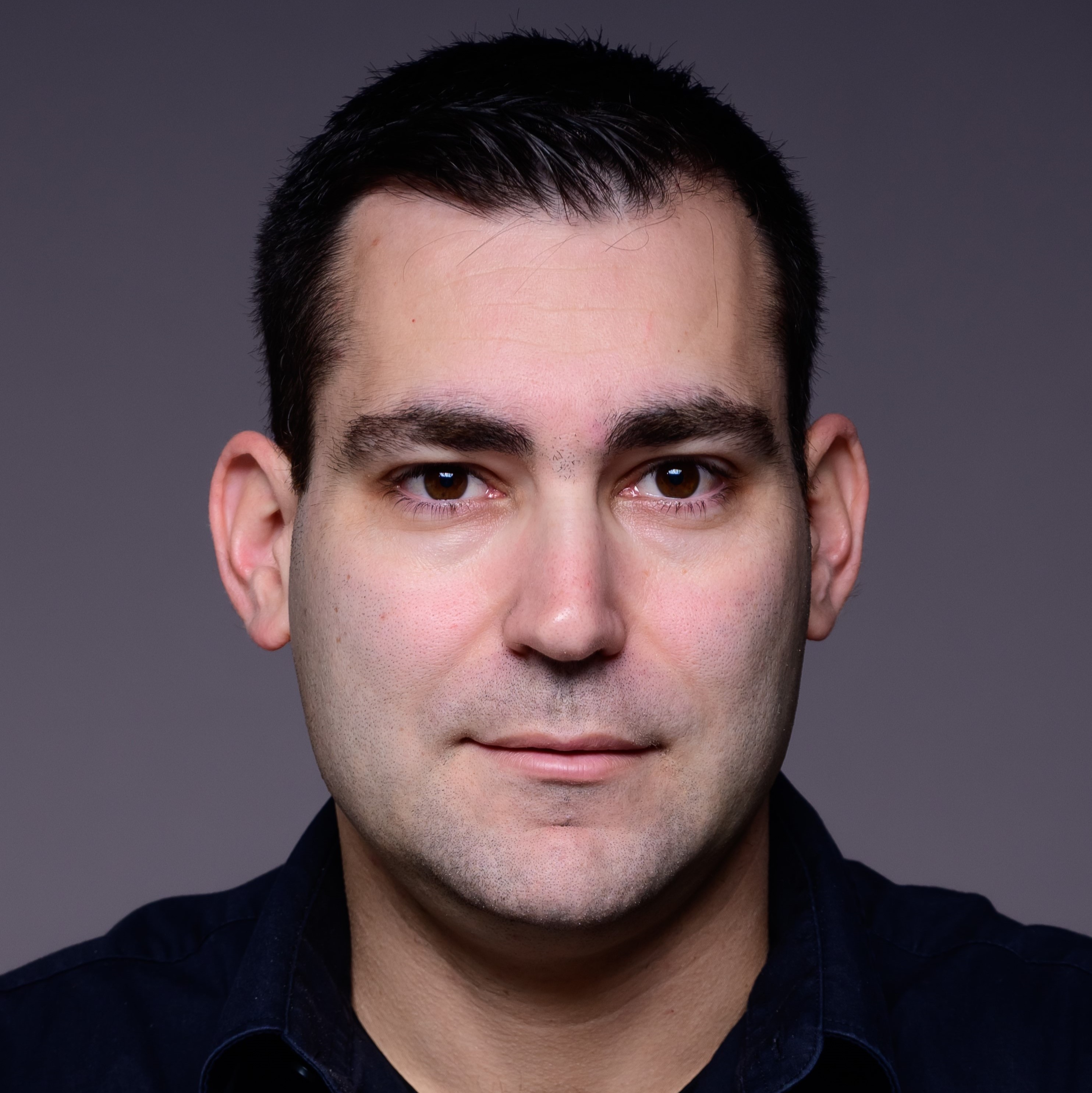}}]{Barak Or} (Member, IEEE) received a B.Sc. degree in aerospace engineering (2016), a B.A. degree (cum laude) in economics and management (2016), and an M.Sc. degree in aerospace engineering (2018) from the Technion–Israel Institute of Technology. He graduated with a Ph.D. degree from the University of Haifa, Haifa (2022).
His research interests include navigation, deep learning, sensor fusion, and estimation theory.
\end{IEEEbiography}

\end{document}